\newtheorem{problem}{Problem}
\newtheorem{theorem}{Theorem}
\newtheorem{corollary}{Corollary}
\begin{document}

\title{A decision support system for optimizing the cost of social distancing in order to stop the spread of COVID-19}
\author[1,2]{Alexandru Popa}
\ead{alexandru.popa@fmi.unibuc.ro}
\address[1]{Department of Computer Science, University of Bucharest}
\address[2]{National Institute of Research and Development in Informatics}

\begin{abstract}
Currently there are many attempts around the world to use computers, smartphones, tablets and other electronic devices in order to stop the spread of COVID-19. Most of these attempts focus on collecting information about infected people, in order to help healthy people avoid contact with them. However, social distancing  decisions are still taken by the governments empirically. That is, the authorities do not have an automated tool to recommend which decisions to make in order to maximize social distancing and to minimize the impact for the economy. 

In this paper we address the aforementioned problem and we design an algorithm that provides social distancing methods (i.e., what schools, shops, factories, etc. to close) that are efficient (i.e., that help reduce the spread of the virus) and have low impact on the economy.

On short: a) we propose several models (i.e., combinatorial optimization problems); b) we show some theoretical results regarding the computational complexity of the formulated problems; c) we give an algorithm for the most complex of the previously formulated problems; d) we implement and test our algorithm; and e) we show an integer linear program formulation for our problem.
\end{abstract}

\begin{keyword}
COVID-19 \sep algorithm \sep combinatorial optimization \sep NP-hard problem
\end{keyword}

\maketitle

\section{Introduction}
\label{sec:intro}
The rapid spread of COVID-19 around the world is stunning. This novel coronavirus created an uprecedented lockdown in many countries which, in turn, caused an immense economic and social impact. Thus, many researchers investigate methods to stop this epidemic as soon as possible. For example, the list of papers on COVID-19 collected by the World Health Organization~\cite{WHO} contains around $10\,000$ publications, a huge number, given that the virus was first discovered in January 2020. The struggle involves researchers from various fields such as bioinformatics, epidemiology, sociology, mathematics and computer science.

One key factor to stop the spread of the virus is the \emph{social distancing} (see, e.g.,~\cite{glass2006targeted}).  Many companies and  organizations try to develop applications to aid the social distancing (see~\cite{TheGov} for a long list of current such projects). However,  many applications seem to focus on tracking people movement. To the best of our knowledge, we do not know any aplication that advises the authorities which decisions to make. As Thomas Pueyo wrotes in his article published on the 19th of March 2020~\cite{Pueyo2020} (Chart 16), governments should have a chart with the effect and the cost of various social distancing measures. As it is currently observed in the world (and especially in Europe), many governments were afraid to take severe social distancing measures in order to avoid a high economic loss.

In this paper we try to address this issue as follows. We first build a model (i.e., a combinatorial optimization problem) that captures the current setting: the risk of the people to get COVID-19, the contact between various people and the cost of closing various facilities such as schools, parks, cities, factories, etc.. We show that the problem we introduce is NP-hard (as it is often the case with complex combinatorial optimization problems). Then, since we cannot solve the problem exactly in polynomial time, we provide a heuristic polynomial time algorithm for this problem. To understand the performance of our algorithm, we implement and test it in Section~\ref{sec:experiments}. We generate our test data  using special probability distributions that simulate real world social networks as we present in Section~\ref{sec:data_gen}. Our experiments are encouraging and show that even with a $1\%$ budget (from the total cost of locking down the entire country), we can reduce the population risk by more than $5$ times compared with the situation in which no measures are taken. Thus, we show that there is a possibility for a ``beautiful'' lockdown that is efficient in the fight with COVID-19 and safe for the economy. 

We mention that the paper is written in a bottom-up fashion. More precisely, in Section~\ref{sec:frame1} we present a prelimiary model that we designed in the early stages of our study.  Even if we do not consider the model in Section~\ref{sec:frame1} further in the paper, the motivation for introducing it is two-fold. Firstly, by presenting the model in Section~\ref{sec:frame1} we show the reader the complete path we took to design the model in Section~\ref{sec:frame2} (instead of simply presenting the final product). Secondly, researchers who aim to study and improve the models presented in this paper may find useful to understand the difficulty behind designing a comprehensive model.

The paper is structured as follows. At first, in Section~\ref{sec:frame1} we present the first set of problems that aim to model the problem. We also show that these problems are NP-hard. Then, in Section~\ref{sec:frame2} we present our actual framework. In Section~\ref{sec:algo} we design an algorithm for the problem presented in Section~\ref{sec:frame2}. Then, in Section~\ref{sec:experiments} we describe our experiments. In Section~\ref{sec:ilp} we present an integer linear program formulation for the problem introduced in Section~\ref{sec:frame2}. Finally, in Section~\ref{sec:conclusions} we discuss several directions for future work.

\section{Preliminary ideas}
\label{sec:frame1}

In this section we introduce a preliminary model (i.e., a collection of related combinatorial optimization problems) that helped us to derive the model from Section~\ref{sec:frame2}. 

\subsection{A tentative framework}
\label{sec:subframe1}
The input consists of an undirected \emph{complete} graph $G = (V,{V \choose 2})$ and a function $p : {V \choose 2} \to [0,1]$. 
Each node $v \in V$ in a graph corresponds to a person and $p(u,v)$ is the probability that two people get in contact to each other. Moreover, each vertex $v \in V$ has associated two values, $risk:V \to [0,1]$ and $vulnerability:V \to [0,1]$, representing how likely is a person to spread the disease (e.g., it can be $1$ if a person is tested positively with COVID-19 or close to 1 if a person was recently in a ``red area''), respectively how vulnerable is a certain person (e.g., there are studies showing that eldery people and people with chronic diseases are more likely to be affected).

Besides the input graph we are given: $k_1$ sets of vertices $V^1 = \{V^1_1, V^1_2, \dots, V^1_k\}$ each one having associated a value  $c_1 : \{1, 2, \dots, k_1\} \to \mathbb{R}_+$ and a value  $r_1: \{1, 2, \dots, k_1\} \to [0,1]$. The cost $c_1(i)$ represents the cost of reducing the value of all $p(a,b)$, $\forall a,b\in V^1_i$ with to $p(a,b) \cdot r_1(i)$ . Informally, the cost $c_1(i)$ represents the cost of closing facility $i$ (i.e., a school, a bar, restaurant, theater, etc.), which, in turn, reduces the probability of the people that belong to that facility to interact with each other. In a simple variant, each $r_1(i)$ can be set to 0, representing that two people that belong to that facility will have probability $0$ to interact once the facility is closed. 

Then, we have $k_2$ sets of vertices $V^2 = \{V^2_1, V^2_2, \dots, V^2_k\}$ each one with a value $c_2 : \{1, 2, \dots, k\} \to \mathbb{R}_+$ and a value $r_2: \{1, 2, \dots, k_2\} \to [0,1]$. The cost $c_2(i)$ represents the cost of reducing the value of all $p(a,b)$, where $a\in V^2_i$ and $b \notin V^2_i$ to $p(a,b)\cdot r_2(i)$. Informally, the cost $c_2(i)$ represents the cost of isolating the people in the group $V^2_i$ (for example, quarantining persons, small groups or even closing entire cities).

\subsection{Possible combinatorial optimization problems}
\label{sec:funny_problems}
Now we introduce a couple of objective functions and constraints that aim to model the current scenario. The overall goal is to reduce the spread of the virus while keeping the cost at a minimum. The first group of problems consider a simplified variant of the framework, ignoring the vulnerability and the risk of each person.

In the first problem the goal is to optimize the economic cost of closing various facilities and isolating various groups of people, while maximizing the number of components created. 
\begin{problem}
\label{prob1}
We are given a budget $B \in \mathbb{R}_+$ and a threshold $P \in [0,1]$. The goal is to select a set $\hat{V^1} \subseteq V^1$ and a set $\hat{V^2} \subseteq V^2$ such that the following two conditions are met:

\begin{enumerate}
    \item 
$$ \sum_{i \in \hat{V^1}} c_1(i) + \sum_{i \in \hat{V^2}} c_2(i) \le B$$
\item 
After the sets of facilities $\hat{V^1}$ and $\hat{V^2}$ are selected and the corresponding edges have their probabilities decreased (as described in Subsection~\ref{sec:subframe1}), we remove all the edges $(a,b) \in {V \choose 2}$ such that $p(a,b) \le P$. The goal is to maximize the number of connected components in the remaining graph.
\end{enumerate}

\end{problem}

As we stated above, the model does not consider all the information. However, it is useful in cases where not much data is available to conduct preliminary tests. Moreover Problem~\ref{prob1} is interesting to study from the theoretical point of view since it is a novel combinatorial optimization problem.

Notice that even this oversimplified variant of the framework is NP-hard since it is a generalization of the classical Vertex Cover problem as we show in Subsection~\ref{sec:hard}.

The second problem that we introduce is similar to the first problem, where the goal is to minimze the budget, while requiring for at least a certain number of connected components to be created.

\begin{problem}
\label{prob2}
We are given a number of desired connected components $N$ and a threshold $P \in [0,1]$. The goal is to select a set $\hat{V^1} \subseteq V^1$ and a set $\hat{V^2} \subseteq V^2$ such that the following holds. After the sets of facilities $\hat{V^1}$ and $\hat{V^2}$ are selected and the corresponding edges have their probabilities decreased, we remove all the edges $(a,b) \in {V \choose 2}$ such that $p(a,b) \le P$. The number of connected components in the remaining graph should be at least $N$. The goal is to minimize $$ \sum_{i \in \hat{V^1}} c_1(i) + \sum_{i \in \hat{V^2}} c_2(i)$$
\end{problem}

If we ask to maximize only the number of connected components we might obtain a solution that does not match the original motivation. For example, we can obtain a solution where we have many small components and a huge component, which is, of course, not desired in practice. Thus, we introduce the following two problems, in which we impose a restriction on the size of the connected componets resulted after the closure of facilities.

\begin{problem}
\label{prob3}
The input is the same as in Problem~\ref{prob1}. The goal is to minimize the number of nodes of the largest connected components in the remaining graph.
\end{problem}

\begin{problem}
\label{prob4}
The input is the same as in Problem~\ref{prob2}, but $N$ instead of being the number of connected componets desired is the maximum allowed size of a connected component. Thus, the goal is to choose a set of facilities of minimum total budget (if such a set exists) such that, after closing these facilities, each resulting component has size less than or equal to $N$.
\end{problem}

In the end of this section, we formulate two more complex problems that aim to take into considerations all the restrictions, including the \emph{risk} and the \emph{vulnerability}.

\begin{problem}
\label{prob5}
Besides the input graph and the data associated with the facilities, we are given a budget $B$, a threshold $P$ and two real numbers $W$ and $R$. We have the following constraints associated with the connected components resulted after closing the facilities:

\begin{enumerate}
    \item For any connected component $X$ the $\sum_{v \in X} vulnerability(v) \le W$. Informally, this constraint aims to avoid large groups formed by vulnerable people (such as eldery, or immunosuppressed).
    \item For any connected component $X$, the $\sum_{u,v \in X} (\max\{risk(u),risk(v)\} \cdot 1 - \min\{risk(u),risk(v)\}) \le R$. Informally, this constraint aims to avoid a connected component that mixes ``healthy'' and ``ill'' people. Notice that if two people have high risk (i.e., that are very likely to have COVID-19) or if two people have very low risk, then $\max\{risk(u),risk(v)\} \cdot 1 - \min\{risk(u),risk(v)\}$ is very close to $0$.
\end{enumerate}

The goal is to select a set of facilities such that, after removing the edges with the probability less than $P$, minimises the number of connected components that violate any of the two above mentioned constraints.
\end{problem}

The final problem that we propose in this section, is very similar to Problem~\ref{prob5} but aims to enforce that all the components resulted obey the restrictions. Nevertheless, in this variant, we are not given a constraint on the budget. Otherwise, if we are given  a constraint on the budget and on the connected components, it is NP-hard even to decide if a feasible solution exists (we obtain an instance of the Knapsack problem that is NP-hard~\cite{garey1979computers}).

\begin{problem}
\label{prob6}
The input is similar to Problem~\ref{prob5}, except that we \emph{do not} have a budget $B$. The goal is to select a set of facilities of minimum cost (if such a set exists) such that, after removing the edges with the probability less than $P$, all the connected components do not violate any of the two constraints defined in Problem~\ref{prob5}.
\end{problem}

\subsection{Hardness results}
\label{sec:hard}

In this section we show that problems introduced in Subsection~\ref{sec:funny_problems} are NP-hard. We  show a complete proof only for Problem~\ref{prob1}, since the  NP-hardness proofs for the other problems are  similar.

\begin{theorem}
Problem~\ref{prob1} is NP-hard.
\end{theorem}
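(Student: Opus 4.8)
The plan is to prove NP-hardness of Problem~\ref{prob1} by a polynomial-time reduction from the classical \textsc{Vertex Cover} problem, which asks, given a graph $H=(V_H,E_H)$ and an integer $k$, whether $H$ admits a vertex cover of size at most $k$. I would work with the decision version of Problem~\ref{prob1}, namely: does some selection of facilities of total cost at most $B$ yield at least $N$ connected components? NP-hardness of this decision version implies NP-hardness of the optimization problem.

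First I would build the instance. Take $V = V_H$ and let $G$ be the complete graph on $V$. I encode $H$ in the edge probabilities by setting $p(u,v) = 1$ for every $(u,v) \in E_H$ and $p(u,v) = 0$ otherwise, and fix the threshold $P = 1/2$; thus, before any facility is closed, the ``active'' graph (edges with $p > P$) is exactly $H$, while every non-edge of $H$ is already absent. I would use no type-1 facilities ($k_1 = 0$), and for each vertex $w \in V_H$ introduce one type-2 singleton facility $V^2_w = \{w\}$ with reduction factor $r_2(w) = 0$ and unit cost $c_2(w) = 1$. Closing $V^2_w$ multiplies the probability of every edge leaving $w$ by $0$ and hence deletes every active edge incident to $w$; that is, it isolates $w$. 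Finally set the budget $B = k$ and the target $N = |V_H|$.

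Then I would establish the equivalence. For the forward direction, if $S$ is a vertex cover of $H$ with $|S|\le k$, closing exactly the facilities $\{V^2_w : w\in S\}$ costs $|S|\le B$; every edge of $H$ has an endpoint in $S$ and is therefore deleted, so the remaining graph has no edges and splits into $|V_H| = N$ singleton components. For the converse, I note that $|V_H|$ is the maximum possible number of components, so achieving $N=|V_H|$ forces every vertex to be isolated, i.e.\ every active edge to be deleted. Letting $S$ be the set of vertices whose facilities were closed, I would argue that an active edge $(u,v)$ (which has $p=1>P$) can only be deleted by closing a facility that contains $u$ or $v$, and since the sole facilities are the singletons $V^2_w$, this means $u\in S$ or $v\in S$. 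Hence $S$ covers all of $E_H$ and $|S|\le B = k$, so $S$ is the desired vertex cover.

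The main obstacle is really this converse argument, which hinges on a clean ``only if'' statement about edge deletion: I must rule out any cheaper or cleverer way of destroying the active edges than isolating one of their endpoints. This is exactly why I restrict to singleton type-2 facilities with $r_2 = 0$ and omit type-1 facilities, so that the only available mechanism for removing an active edge is to pay to isolate one of its two endpoints, making the correspondence between minimal facility selections and vertex covers exact. The remaining details (polynomial size of the construction and the passage from the decision version to the optimization statement) are routine.
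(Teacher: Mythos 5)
Your proposal is correct and follows essentially the same reduction as the paper: both encode a \textsc{Vertex Cover} instance via probabilities $1$ on edges and $0$ on non-edges, use only singleton type-2 facilities with reduction factor $0$ and unit cost, set the budget to $k$, and show that a cover of size at most $k$ exists if and only if all $|V_H|$ vertices can be made isolated components. If anything, your version is slightly more careful than the paper's, since you explicitly fix the threshold $P = 1/2$ (the paper leaves $P$ implicit) and explicitly state the decision version of the optimization problem being reduced to.
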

\begin{proof}
We show a simple reduction from the Vertex Cover problem which is a classical NP-hard problem~\cite{garey1979computers}. In the (decision version of the) Vertex Cover problem the input is an undirected graph $G=(V,E)$ and an integer $k$ and the goal is to decide, if  exists, a subset $V' \subseteq V$ such that $|V'| \le k$ and for any edge $(a,b) \in E$, either $a \in V', b \in V'$ or both. Thus, given an instance of Vertex Cover that is, a graph $G = (V,E)$ and an integer $k$, we construct an instance of Problem~\ref{prob1} as follows. 
\begin{enumerate}
    \item  The input graph $G'$ of Problem~\ref{prob1} has the same vertex set $V$.
    \item The edge set is constructed as follows: for every edge $(a,b) \in E$, we set $p(a,b) = 1$, otherwise we set $p(a,b) = 0$. 
    \item The set $V^1 = \emptyset$. 

    \item The set $V^2 = \{ \{v\} \text{ } | \text{ } \forall v \in V\}$, while the cost $c_2$ of selecting any set from $V^2$ is $1$ and $r_1$ is 0 (that is, all the edges that are incident to a selected vertex are deleted).
    \item The budget $B = k$.
\end{enumerate}

Now, we show that the graph $G = (V,E)$ has a vertex cover of size at most $k$ if and only if the maximum number of connected components in the corresponding instance of Problem~\ref{prob1} is $n$. 

First, given a vertex cover $V'$, the solution of Problem~\ref{prob1} that creates $n$ connected components selects the set $\hat{V^2} = \{ \{v'\} \text{ } | \text{ } v' \in V' \}$, that is, we select the sets from $V^2$ corresponding to the vertices in $V'$. Since $V'$ is a vertex cover, any edge is incident to at least one vertex from $V'$, thus $p(a,b) = 0, \forall a,b \in V$ after selecting $\hat{V^2}$.

Conversely, given a set $\hat{V^2}$, such that $|\hat{V^2}| \le k$, we construct the set $V' = \{ v' \text{ } | \text{ } \{v'\} \in \hat{V^2} \}$. Since $n$ connected components are created after selecting $\hat{V^2}$, we know that $p(a,b) = 0, \forall a,b \in V$ (otherwise, we have a connected component with at least two vertices). Since  $p(a,b) = 0, \forall a,b \in V$, we know that for any edge $(a,b) \in E'$ that had $p(a,b) = 1$, either $\{a\} \in \hat{V^2}$ or $\{b\} \in \hat{V^2}$. Thus, $V' = \{ v' \text{ } | \text{ } \{v'\} \in \hat{V^2} \}$ is a vertex cover of $G$, completing the proof.
\end{proof}

Using a similar reduction, we can show that Problems~\ref{prob2},~\ref{prob3},~\ref{prob4},~\ref{prob5} and ~\ref{prob6} are NP-hard. Thus, we state the following corollary.

\begin{corollary}
Problems~\ref{prob2},~\ref{prob3},~\ref{prob4},~\ref{prob5} and ~\ref{prob6} are NP-hard.
\end{corollary}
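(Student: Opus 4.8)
The plan is to reuse verbatim the Vertex Cover gadget from the proof of the preceding theorem and to vary only the problem-specific parameters. Recall the construction: given a Vertex Cover instance $(G,k)$ with $G=(V,E)$ and $|V|=n$, I would build $G'$ on the same vertex set $V$, set $p(a,b)=1$ for every $(a,b)\in E$ and $p(a,b)=0$ otherwise, take $V^1=\emptyset$, let $V^2$ be the collection of singletons $\{\{v\}:v\in V\}$ each of cost $c_2=1$ with $r_2=0$, and fix the threshold $P=0$. The single invariant driving every reduction is: selecting a subset $S\subseteq V$ of singletons isolates exactly the vertices of $S$, deletes precisely the edges incident to $S$, and---after removing all edges with $p(a,b)\le P=0$---leaves a graph whose only surviving edges are the $p=1$ edges of $G$ not covered by $S$. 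Hence the surviving graph is edgeless (equivalently, has $n$ connected components, equivalently, has largest component of size $1$) if and only if $S$ is a vertex cover of $G$.

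Given this invariant, Problems~\ref{prob2},~\ref{prob3} and~\ref{prob4} follow immediately by instantiating each problem's target quantity. For Problem~\ref{prob2} I would set the required number of components to $N=n$; achieving $n$ components is exactly isolating a vertex cover, so the minimum attainable budget equals the minimum vertex cover size, and the decision question ``budget $\le k$?'' answers ``vertex cover of size $\le k$?''. For Problem~\ref{prob4} I would instead set the maximum allowed component size to $N=1$, which again forces an edgeless surviving graph and identifies the minimum feasible budget with the minimum vertex cover size. For Problem~\ref{prob3}, keeping the budget $B=k$, the minimum largest-component size equals $1$ precisely when some set of at most $k$ isolations deletes all edges, i.e. when $G$ has a vertex cover of size at most $k$; thus the threshold decision ``largest component $\le 1$?'' is equivalent to Vertex Cover.

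For the risk/vulnerability Problems~\ref{prob5} and~\ref{prob6} I would additionally fix the per-vertex data so that the two component constraints collapse to ``every component is a singleton''. Concretely, set $risk(v)=0$ for all $v$, so that $\max\{risk(u),risk(v)\}-\min\{risk(u),risk(v)\}=0$ and the second constraint is satisfied by every component regardless of $R$; and set $vulnerability(v)=1$ for all $v$ together with $W=1$, so that the first constraint $\sum_{v\in X}vulnerability(v)\le W$ holds for a component $X$ if and only if $|X|=1$. Under this configuration a component violates a constraint exactly when it contains an edge. Problem~\ref{prob5}, with budget $B=k$, then asks to minimize the number of non-singleton components, whose optimum is $0$ iff an isolation set of size $\le k$ covers all edges; and Problem~\ref{prob6}, without a budget, asks for the minimum-cost isolation making all components singletons, whose optimum equals the minimum vertex cover size.

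The reductions are routine once the parameters are pinned down, so I do not expect a genuine obstacle; the only points requiring care are bookkeeping ones. First, I must verify both directions of each equivalence exactly as in the theorem---constructing an isolation set from a vertex cover and, conversely, reading a vertex cover off any feasible isolation set. Second, for Problems~\ref{prob3},~\ref{prob5} and~\ref{prob6} I should state the decision versions explicitly (a threshold on the largest component, on the number of violating components, and on the cost, respectively) so that NP-hardness of the optimization problems is inherited in the standard way. I would also double-check the boundary handling of the threshold $P=0$ to ensure the $p=0$ edges disappear while the $p=1$ edges survive, matching the theorem's construction.
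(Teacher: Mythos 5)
Your proposal is correct and follows exactly the route the paper intends: the paper's own justification for this corollary is a single sentence stating that ``a similar reduction'' to the Vertex Cover reduction of the preceding theorem works, and your instantiations (singleton isolation sets of unit cost, $P=0$, and the parameter choices $N=n$, $N=1$, $B=k$, plus $risk\equiv 0$, $vulnerability\equiv 1$, $W=1$ for the risk/vulnerability variants) are precisely the details that make that sentence rigorous. In fact your write-up is more complete than the paper's, since it verifies both directions of each equivalence and handles the threshold boundary explicitly.
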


%Since even the simplest variant of the problems introduced in this paper is NP-hard, during the rest of the paper we focus on approximate variants of the problem. 

%We are aware that the problem can be tackled via other methods, such as approximation algorithms (with a worst case approximation guarantee)~\cite{} or fixed parameter algorithms~\cite{} or integer linear programming solvers~\cite{}. 

\section{The framework for modeling COVID-19}
\label{sec:frame2}

The framework presented in the previous section, although promising, has the following problem. The closure of a facility might not have the same effect for all the people that are connected through that facility. Consider the following simple example: two siblings (who live in the same house) study at the same school. Then, after closing the school, in reality the two siblings still have a large probability to get in contact with each other. Thus, we introduce the following framework which captures the aforementioned example and is also simpler than the framework presented in Section~\ref{sec:frame1}.

\begin{problem}
\label{final_prob}

The input consists of a bipartite graph $G = (U \cup V, E)$. The set $U$ represents the people and the set $V$ represents the facilities. For each edge we have associated a value $p : U \times V \to [0,1]$ that represents the percentage of the time spent by a person in that facility in a day. For example, if $p(a,b) = 0.25$, then person $a$ spends $6$ hours ($0.25 \times 24\text{ hours}$) in facility $b$. Each person has associated a probability $f:U \to [0,1]$ of being infected. Each facility has associated a closure cost $c : U \to \mathbb{R}_+$. Closing a facility $v$ is equivalent with removing the edges incident to $v$. Moreover, we are given a cost $c' : U \to \mathbb{R}_+$ of isolating people. Isolating a subset of people $U'$ is equivalent with removing the edges incident to all the vertices in $U'$. Moreover, we are given a total budget $B$ for closing the facilities. 

The risk of a facility is informally the weighted (using the risk as the weight) sum of the time spent by the people in that facility. More precisely, $R:V \to \mathbb{R}_+$ is:

$$R(v) = \sum_{u \in U : (u,v) \in E} f(u) \cdot p(u,v)$$

The risk of a person $r:U \to \mathbb{R}_+$ is defined as the weighted sum spent by a person in the facilties he visits (weighted using the riks of the facility). Formally:

$$r(u) = \sum_{v \in V : (u,v) \in E} R(v) \cdot p(u,v)$$

We define $r(U)$ the vector in $\mathbb{R}^{|U|}$, that has in each component the risk of a person.

The goal is to select a set of facilities of total cost at most $B$ such that a given function $F:r(U) \to \mathbb{R}$ is minimized. In this paper we study the problem for $F$ as $\ell_1$. In other words, we aim to optimize the total risk of the people.
\end{problem}

We show that Problem~\ref{final_prob} is NP-hard even in an extremely restricted version in which there is only one person associated with a facility.

\begin{theorem}
Problem~\ref{final_prob} is NP-hard in the case  $F = \ell_1$.
\end{theorem}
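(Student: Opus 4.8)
The plan is to reduce from the Knapsack problem, whose decision version is a classical NP-complete problem~\cite{garey1979computers} and which the excerpt already anticipates will surface. Recall a Knapsack instance: items $1, \dots, n$ with positive integer values $w_1, \dots, w_n$ and weights $s_1, \dots, s_n$, a capacity $C$ and a target $T$; the question is whether some subset $S$ satisfies $\sum_{i \in S} s_i \le C$ and $\sum_{i \in S} w_i \ge T$. I would map such an instance to a restricted instance of Problem~\ref{final_prob} in which every facility is incident to exactly one person (and, in fact, every person to exactly one facility), so that the single-person-per-facility restriction of the theorem is respected.

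For the construction, for each item $i$ I create one person $u_i$ and one facility $v_i$ joined by a single edge. I set the closure cost $c(v_i) = s_i$ and the budget $B = C$. To encode the value I normalize by $M = \sum_j w_j$ and set $f(u_i) = w_i / M$ and $p(u_i, v_i) = 1$, so that both quantities lie in $[0,1]$ as required by the model. No person is ever isolated, so the isolation costs $c'$ play no role, and the whole instance is built in polynomial time.

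The crucial observation --- and the step I expect to be the main obstacle, since the two-level risk definition and the squaring of $p$ obscure it --- is that when each facility touches a single person the objective collapses into a purely additive function. Indeed, since $v_i$ is the only facility seen by $u_i$, one computes $R(v_i) = f(u_i)\,p(u_i, v_i)$ and hence $r(u_i) = f(u_i)\,p(u_i, v_i)^2 = w_i/M$ while $v_i$ stays open, and $r(u_i)=0$ once $v_i$ is closed. The decoupling is clean precisely because closing a facility deletes \emph{edges} rather than removing people, so it cannot alter $R(v_j)$ of any other facility. Consequently, after closing a set $S$ the total ($\ell_1$) risk equals $\sum_{i \notin S} w_i / M$, that is a constant minus the additive gain $\sum_{i \in S} w_i / M$ from closing, and minimizing total risk subject to $\sum_{i \in S} s_i \le B$ is \emph{exactly} maximizing $\sum_{i \in S} w_i$ under the weight bound.

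To finish, I would take the decision version of Problem~\ref{final_prob} with risk threshold $\tau = (M - T)/M$ and argue the equivalence: a closing set $S$ of cost at most $B = C$ achieves total risk at most $\tau$ if and only if $\sum_{i \in S} w_i \ge T$ together with $\sum_{i \in S} s_i \le C$, i.e. if and only if the Knapsack instance is a yes-instance. This establishes that Problem~\ref{final_prob} with $F = \ell_1$ is NP-hard even in the extremely restricted single-person-per-facility case, as claimed. The only genuine care needed is in verifying the additive collapse of the objective described above; once that is in hand, the Knapsack structure appears immediately.
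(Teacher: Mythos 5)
Your proof is correct and takes essentially the same route as the paper: both constructions pair each person with exactly one facility via a single edge with $p(u,v)=1$, encode the input numbers as closure costs with infection probabilities scaled into $[0,1]$, and rely on the same key observation that the two-level risk then collapses to an additive function of the set of closed facilities. The only real difference is cosmetic --- the paper reduces from Subset Sum by making $f(u)$ proportional to $c(v)$, so hitting the risk target corresponds to a subset of cost exactly $B$, whereas you reduce from Knapsack with costs and values decoupled and use a threshold inequality; both are equally valid.
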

\begin{proof}
The problem can be reduced to the Subset Sum problem in which the input is a set $S$ of numbers and an integer $B$ and the goal is to decide if there exists a subset of numbers from $S$ whose sum is precisely $B$. The Subset Sum problem is a famous NP-hard problem~\cite{garey1979computers}. Given an instance of the Subset Sum problem, we create an instance of Problem~\ref{final_prob} as follows. For each number $x$ in the set $S$, we create a facility $v$ of cost $c(v) = x$. Each facility $v$ has precisely one edge $(u,v) \in E$, with $f(u) = c(v)/ max_{v\in V} c(V)$ and $p(u,v) = 1$. Thus, for each pair person/facility $(u,v)$ we have $R(v) = r(u) = c(v)/ max_{v\in V}$. A set of facilities of cost $X$ gives a total risk for people of cost $$ \frac{\sum_{v \in V} c(v) - X}{\sum_{v \in V} c(v)}.$$ Therefore, if we can decide in polynomial time if the total risk incurred for the population in Problem~\ref{final_prob} is $$ \frac{\sum_{v \in V} c(v) - B}{\sum_{v \in V} c(v)},$$ then we can decide if there exist a subset of numbers from $S$ that have sum precisely $B$. Thus, Problem~\ref{final_prob} is NP-hard in the case $F = \ell_1$.
\end{proof}

\section{The algorithm}
\label{sec:algo}
In this section we provide a heuristic (approximation) algorithm for Problem~\ref{final_prob}. We test our algorithm in Section~\ref{sec:experiments} and show that it gives promising results.

Our algorithm (presented in Algorithm~\ref{alg:heuristic}) sorts the list of people and the facilities according to their efficiency (the cost of isolating/closing a person/faciltiy divided by the amount of risk the people/facilities have). Then, the algorithm aims to find the optimum division of the available budget between isolating people and closing facilities. According to our experiments (see Section~\ref{sec:experiments}) there is not an obvious correlation between the optimal value of the division of the budget (i.e., variable $Split$ in Algorithm~\ref{alg:heuristic}) and the minimum total risk. Thus, we need to iterate over all values of $Split$ in order to find a good solution. Of course, since there are infinitely many numbers between $0$ and $1$, we cannot iterate over all possible values. Choosing a larger increment  improves the running time but reduces the accuracy of the solution.

\begin{algorithm}

\begin{enumerate}
    \item Define the efficiency of a facility $v$ as $$e(v) = \frac{c(v)}{ R(v)}$$
    \item Define the efficienty of isolating a person $u$ as $$e'(u) =  \frac{c'(u)}{f(u)}$$
    \item Sort the sequence of values $e$ and $e'$ in increasing order.
    \item $MinRisk = \infty$
    \item For every value of \emph{Split} between $0$ and $100$ (in increments of $1$) do:
    \begin{enumerate}
        \item Isolate people in the order given by $e'$ until a budget of $B \cdot \frac{1}{Split}$ is reached.
        \item Close the facilities in the order given by $e$ until the budget $B$ is reached.
        \item Let $R_{Split}$ be the total risk of the population acording to this solution. If $R_{Split}< MinRisk$ then we update the value of $MinRisk$ and store the current solution.
    \end{enumerate}
    \item {\bf Output:} MinRisk and the corresponding set of people and facilities that have to be isolated/closed.
\end{enumerate}

\caption{A heuristic algorithm for Problem~\ref{final_prob}.}
\label{alg:heuristic}
\end{algorithm}

\section{Experiments}
\label{sec:experiments}

\subsection{Data generation}
\label{sec:data_gen}
In this subsection we describe how we generated our data. 

First our data generator allows two parameters as input that determine the number of facilities and the maximum size of a facility. The size of the facilities (i.e., how many people visit that facility in a day) is drawn according to a power law distribution with exponent $\alpha$ (in our experiments $\alpha$ varies between $0.8$ and $1.3$). We also select an average number of daily activities for a person (i.e., how many facilities  a person visits during one day). In our experiments the average number of activities is set between $3$ and $8$). Then, we set the number of people in a country to be the sum of all the facilities divided by the average number of facilities a person visits during one day.

In Figure~\ref{fig:1} we show an example of the distribution of the size of the facilities for $1000$ facilities each having a size between $10$ and $10000$.

\begin{figure}[h!tb]
\includegraphics[width=\textwidth,height=\textheight,keepaspectratio]{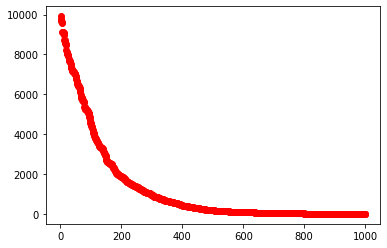}
\caption{The size of $1000$ facilities (i.e. daily number of people that visit that facility). Each facility has at least $10$ and at most $10000$ daily visitors. The number of visitors is drawn from a power law distribution with $\alpha = 1.1$.}
\label{fig:1}
\end{figure}

For each facility $v$ we select $size(v)$ people that will visit that facility uniformly at random from the population, where $size(v)$ is the size of facility $v$ that was generated previously using the power law distribution. The number of activities performed daily by each person  form a Poisson distribution (see Figure~\ref{fig:2} for an example). 

\begin{figure}[h!tb]
\includegraphics[width=\textwidth,height=\textheight,keepaspectratio]{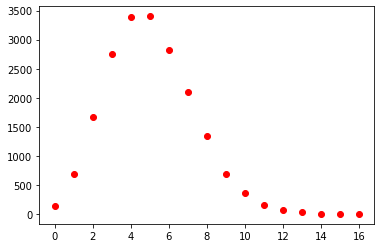}
\caption{The distribution of number of daily activities for a population of  $19\,573$ people that have on average $5$ daily activities.}
\label{fig:2}
\end{figure}

We now show how we generate the weights on the edges. For each person, we choose the time spent in each facility using an exponential distribution. 
%In Figure~\ref{fig:3} we show three examples of the time distribution among various activities. 

%\begin{figure}[h!tb]
%\includegraphics[width=\textwidth,height=\textheight,keepaspectratio]{Figure_2.png}
%\caption{The distribution of time spent performing various activities for three classes of people: a) A person that does only two activities (e.g., an old person that stays most of the time at home and goes for shopping a small amount of time. b) A moderately active person that goes to work; c) A very active person}
%\label{fig:3}
%\end{figure}
The chance that a person $i$ caries the virus, i.e., $f(i)$, is also drawn from a power law distribution with exponent $\alpha_2$. One important thing to notice is that  $\alpha_2$ influences significantly the risk of the whole population to get infected. More precisely, if $\alpha_2$ is large (that is, there are few people with high risk of carying the virus),  the risk of infection for the other people is relatively low. In Figure~\ref{fig:4} we show the risk associated to the people (calculated as shown  in Problem~\ref{final_prob}) for the values of $\alpha_2 = 4$ and $\alpha_2 = 2$. This observation, motivates us in the design of algorithm, by isolating first the persons with very high risk.

\begin{figure}[h!tb]
    \centering
    \begin{subfigure}[t]{0.5\textwidth}
        \centering
        \includegraphics[height=1.5in]{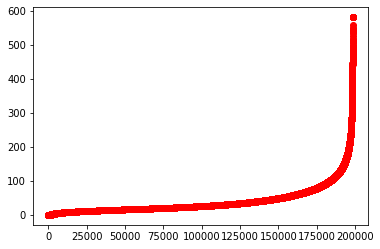}
        \caption{The case of $\alpha_2 = 2$}
    \end{subfigure}%
    ~ 
    \begin{subfigure}[t]{0.5\textwidth}
        \centering
        \includegraphics[height=1.5in]{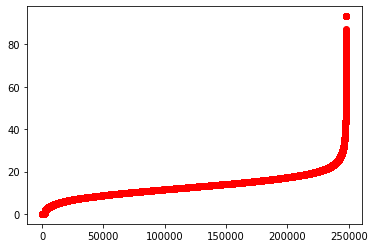}
        \caption{The case of $\alpha_2 = 4$}
    \end{subfigure}
    \caption{Comparison of the risk of the population to get infected for $\alpha_2 = 2$ and $\alpha_2 = 4$}
    \label{fig:4}
\end{figure}

Finally, we have to set the cost of isolating people and the cost of closing facilities. We choose the cost of isolating a person as a fraction of total budget available (this fraction can also be set as an input parameter in our generator). The cost of closing a facility of size $s$ is $s^x$, where $x$ is a random variable drawn according to a Gaussian distribution with mean $\mu$ and variance $\sigma$ (in our tests we vary the $\mu$ between $1.1$ and $1.2$ and $\sigma$ between $0.3$ and $0.5$). Finally, the budget is also an input parameter in the generator and we design it as a fraction of the total cost of closing the facilities, generally, between $1\%$ and $30\%$.

\subsection{Tests}

%We carry out experments for the number of facilities ranging between $200$ and $5000$ and their maximum/minimum size ranging between $15000/10$ and respectively $2000/3$. Also we vary the avereage number of daily activities between $3$ and $8$.

We carry our tests for a population of around $30\,000$ people. This population is achieved by varying the parameters in our model as: the number of facilities (between $100$ and $1\,000$), the average number of daily activities (between $3$ and $8$) and the size of each facility (between $4$ and $10\,000$). For each set of parameters we carry out $5$ tests and we choose the average risk produced by our algorithm over these $5$ tests.

The dataset size is the maximum that our hardware can handle. Nevertheless, we argue that our experiments scale to a larger population. In Figure~\ref{fig:5} we show how the risk changes if we change the number of facilities and the size of each facility: the risk has a decreasing trend as the size of our population increases, thus we believe that our algorithm is even better for larger scale instances.

\begin{figure}[h!tb]
    \centering
    \begin{subfigure}[t]{0.5\textwidth}
        \centering
        \includegraphics[height=1.5in]{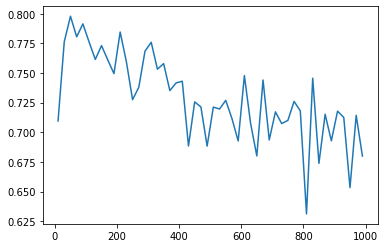}
        \caption{The horizontal axis represents the number of facilities, while the vertical axis represents the risk improvement. The size of each facility is between $4$ and $1\,000$.}
    \end{subfigure}%
    ~ 
    \begin{subfigure}[t]{0.5\textwidth}
        \centering
        \includegraphics[height=1.5in]{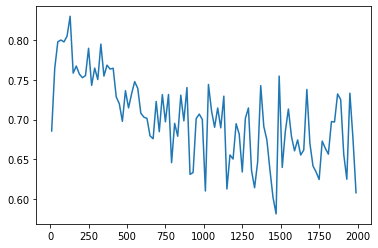}
        \caption{The horizontal axis represents the size of each facility, while the vertical axis represents  the risk improvement. The number of facilities is $500$ and the average number of daily activities performed by a person is $4$.}
    \end{subfigure}
    \caption{The improvement in the population risk (i.e., the risk of the population after our algorithm, divided by the risk before the run of our algorithm) compared with the size of the instance. The average number of daily activities performed by a person is $4$. The budget allocated is $5\%$ of the ammount necessary to close all facilities. With this budget we are able to quarantine $5\%$ of the population. Then, we have $\alpha = 1.1$ and $\alpha_2 = 2$.
    Observe that the improvement is bigger as the population increases. }
    \label{fig:5}
\end{figure}

Next, we show how the split of the budget between isolating people and closing facilities influences the total risk. In our tests we have $500$ facilties between $4$ and $1\,000$ people, each person performs on average $4$ activities per day and we have $\alpha = 1.1$, $\alpha_2 = 2$. The cost of the exponent of the random variable that determines the cost of closing the facilities is drawn from a normal distribution with $\mu = 1.1$ and $\sigma = 0.4$ (Figure~\ref{fig:6}a and Figure~\ref{fig:7}a), $\sigma = 0.5$ (Figure~\ref{fig:6}b and Figure~\ref{fig:7}b). The budget is $10\%$ of the cost of closing all facilities in Figure~\ref{fig:6} and $1\%$ in Figure~\ref{fig:7}. This budget suffices to isolate $10\%$ of the population, respectively $1\%$.  

 Notice that with  a budget of only $1\%$ from the cost of closing all facilities, we are able to lower the risk to less than $20\%$ of the original risk (Figure~\ref{fig:7}b).

%Notice that even with such a small budget the risk after running the algorithm is significantly reduced (at around $10\%$ of the original risk).

\begin{figure}[h!tb]
    \centering
    \begin{subfigure}[t]{0.5\textwidth}
        \centering
        \includegraphics[height=1.5in]{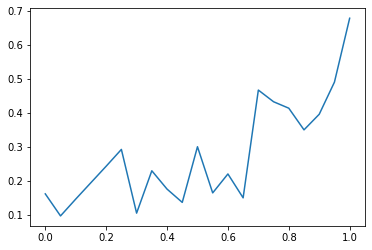}
        \caption{A $10\%$ of the total cost of closing the facilities and $\sigma = 0.4$}
    \end{subfigure}%
    ~ 
    \begin{subfigure}[t]{0.5\textwidth}
        \centering
        \includegraphics[height=1.5in]{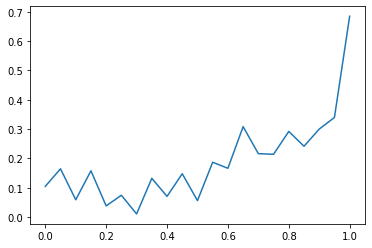}
        \caption{A $10\%$ of the total cost of closing the facilities and $\sigma = 0.5$}
    \end{subfigure}
    \caption{On the $x$ axis is the percentage of the total budget allocated to isolating people. On the $y$ axis there is the ratio of the risks  before/after running the algorithm.}
    \label{fig:6}
\end{figure}

\begin{figure}[h!tb]
    \centering
    \begin{subfigure}[t]{0.5\textwidth}
        \centering
        \includegraphics[height=1.5in]{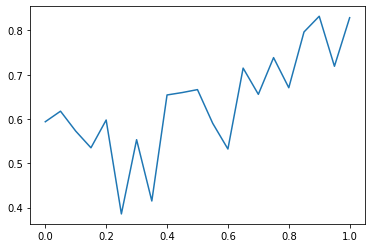}
        \caption{A $1\%$ of the total cost of closing the facilities and $\sigma = 0.4$}
    \end{subfigure}%
    ~
    \begin{subfigure}[t]{0.5\textwidth}
        \centering
        \includegraphics[height=1.5in]{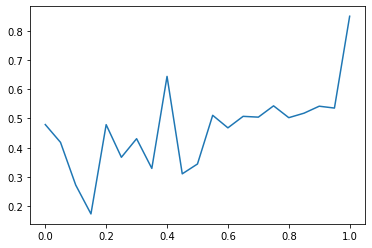}
        \caption{A $1\%$ of the total cost of closing the facilities and $\sigma = 0.5$}
    \end{subfigure}%
    \caption{On the $x$ axis is the percentage of the total budget allocated to isolating people. On the $y$ axis there is the ratio of the risks  before/after running the algorithm.}
    \label{fig:7}
\end{figure}

Finally, we tested how does the risk decrease if we take actions quickly. More precisely, we vary $\alpha_2$ which is the power law exponent that determines the percentage of people that are likely to be already infected. However, we did not notice any major influence of this factor in the total risk if the infection proportion is drawn according to a power law distribution. 

Our algorithm was implemented in Python and the tests were carried out on a $2013$ MacBook Pro with $2.4$ GHz Quad-Core Intel Core $i7$, and $8$GB RAM. The code used for testing and generating data is available on GitHub~\cite{Popa2020}.

\section{Integer programming formulation (ILP)}
\label{sec:ilp}
We are aware that NP-hard problems can be formulated as integer linear programms (ILPs) and then solved using dedicated software (e.g., Gurobi~\cite{gurobi2018gurobi} or CPLEX~\cite{cplex2009v12}). However, the ILP approach is feasible only for small datasets, which is not the case in this paper.

Nevertheless, for the sake of completeness, in this section we present an ILP formulation of Problem~\ref{final_prob}. We have the following types of variables. First we have two variables $x_u$, for each person $u \in U$ and $y_v$, for each facility $v \in V$. Both variables are either $0$ or $1$ depending whether the corresponding person is isolated or not,  respectively whether the corresponding facility is closed or not. Then, we have variables $r_u$ and, respectively, $R_v$, that determine the risk of a person, respectively of a facility. The ILP is presented below.

\begin{equation*}
\begin{array}{ll@{}ll}
\text{minimize}  &  & \\
& \displaystyle\sum\limits_{u \in U} r(u) x_u    & \\

\text{subject to}& & \\

& \displaystyle\sum\limits_{u \in U} c'(u) x_u + \displaystyle\sum\limits_{v \in V} c(v) y_v \leq B   & \\
& \displaystyle\sum\limits_{(u,v) \in E} f(u) p(u,v) x_u  =  R_v  & \forall v \in V \\ 
& \displaystyle\sum\limits_{(u,v) \in E} p(u,v)R_v y_v   =  r_u  & \forall u \in U \\ 
& x_{u}, x_{v} \in \{0,1\}  & \forall u \in U, v \in V
 
\end{array}
\end{equation*}

\section{Conclusions and future work}
\label{sec:conclusions}
In this paper we presented a model and an algorithm that aims to help authorities to take more efficient decisions in the fight with COVID-19. Naturally, the most stringent open problem is to test and validate the model and the algorithm on real data. People have a huge mobility nowadays and it is impossible to create a model which is fully accurate. Nevertheless, based on our tests we believe that our model is capable of capturing the most important features of the current situation. 

Also, a natural open problem is to tune the input parameters: the probabilities $p$ in the input graph, the cost of closing facilities and isolating people $c$ and $c'$ and the contagion risk associated with each person. 

Since the economy is under severe pressure under the current lockdown, we expect that some ease of the restrictions will happen soon. Thus, we are hopeful that our model will give the authorities some insight in taking the best decisions. Moreover, as we can see from our experiments, even with a very small budget (sometimes as low as $1\%$ of the total cost necessary to lock down the entire economy), the risk of infection can be decreased significantly. Thus, we strongly believe that, with wise decisions, it is possible to stop the spread of COVID-19 without an economic collapse.

\paragraph*{Acknowledgements.} I would like to thank Ramona Georgescu, P\' eter Bir\' o,  Radu Mincu and Lucian-Ionut Gavril\u a for extremely useful discussions.

\bibliographystyle{plain}
\bibliography{bib}

\end{document}